\newtheorem{theorem}{Theorem}
\newtheorem{lemma}{Lemma}
\newtheorem{remark}{Remark}
\providecommand{\eref}[1]{\eqref{eq:#1}}  
\providecommand{\cref}[1]{Chapter~\ref{chap:#1}}
\providecommand{\sref}[1]{Section~\ref{sec:#1}}
\providecommand{\fref}[1]{Figure~\ref{fig:#1}}
\providecommand{\R}{\ensuremath{\mathbb{R}}}
\providecommand{\C}{\ensuremath{\mathbb{C}}}
\providecommand{\abs}[1]{\lvert#1\rvert}
\providecommand{\norm}[1]{\lVert#1\rVert}
\providecommand{\set}[1]{\left\{#1\right\}}
\providecommand{\bydef}{\overset{\text{def}}{=}}
\renewcommand{\vec}[1]{\ensuremath{\boldsymbol{#1}}}
\providecommand{\mat}[1]{\ensuremath{\boldsymbol{#1}}}
\providecommand{\mA}{\mat{A}} \providecommand{\mB}{\mat{B}}
\providecommand{\mD}{\mat{D}}
\providecommand{\mH}{\mat{H}}
\providecommand{\mI}{\mat{I}}  
 \providecommand{\mL}{\mat{L}} 
\providecommand{\mM}{\mat{M}}  
\providecommand{\mQ}{\mat{Q}} \providecommand{\mR}{\mat{R}}
 \providecommand{\mU}{\mat{U}} 
\providecommand{\mV}{\mat{V}}
\providecommand{\mZ}{\mat{Z}}
\providecommand{\mSigma}{\mat{\Sigma}}
 \providecommand{\mG}{\mat{G}}
\providecommand{\va}{\vec{a}} 
 \providecommand{\ve}{\vec{e}}
\providecommand{\vg}{\vec{g}}
 \providecommand{\vp}{\vec{p}}
\providecommand{\vu}{\vec{u}} \providecommand{\vw}{\vec{w}}
\providecommand{\vx}{\vec{x}} \providecommand{\vy}{\vec{y}}
 \providecommand{\vzero}{\vec{0}}
\providecommand{\vxi}{\vec{\xi}}
\providecommand{\vv}{\vec{v}}
\newcommand{\GE}{\text{Ginibre}}
\newcommand{\HE}{\text{Haar}}
\newcommand{\Ortho}{\mathbb{O}}
\newcommand{\Unitary}{\mathbb{U}}
\providecommand{\eqd}{\overset{d}{=}}
\DeclareMathOperator{\diag}{diag}
\DeclareMathOperator{\sign}{sign}
\newcommand*{\tran}{^{\mkern-1.5mu\mathsf{T}}}
\newcommand*{\herm}{^\ast}
\providecommand{\vxi}{\vec{\xi}}
\providecommand{\vbeta}{\vec{\beta}}
\title{Householder Dice: A Matrix-Free Algorithm for Simulating Dynamics on Gaussian and Random Orthogonal Ensembles}
\author{%
Yue M. Lu
\thanks{Y. M. Lu is with the John A. Paulson School of Engineering and Applied Sciences, Harvard University, Cambridge, MA 02138, USA (e-mail:  \href{mailto:yuelu@seas.harvard.edu}{yuelu@seas.harvard.edu}). The initial part of this work was done during his sabbatical at the \'{E}cole normale sup\'{e}rieure (ENS) in Paris, France in Fall 2019. He thanks colleagues at the ENS for their hospitality and stimulating discussions. This work was supported by the Harvard FAS Dean's Fund for Promising Scholarship, by the chaire CFM-ENS ``Science des donnees'', and by the US National Science Foundation under grants CCF-1718698 and CCF-1910410.}%
}
\date{}
\begin{document}


\maketitle

\begin{abstract}
This paper proposes a new algorithm, named Householder Dice (HD), for simulating dynamics on dense random matrix ensembles with translation-invariant properties. Examples include the Gaussian ensemble, the Haar-distributed random orthogonal ensemble, and their complex-valued counterparts. A ``direct'' approach to the simulation, where one first generates a dense $n \times n$ matrix from the ensemble, requires at least $\mathcal{O}(n^2)$ resource in space and time. The HD algorithm overcomes this $\mathcal{O}(n^2)$ bottleneck by using the principle of deferred decisions: rather than fixing the entire random matrix in advance, it lets the randomness unfold with the dynamics. At the heart of this matrix-free algorithm is an adaptive and recursive construction of (random) Householder reflectors. These orthogonal transformations exploit the group symmetry of the matrix ensembles, while simultaneously maintaining the statistical correlations induced by the dynamics. The memory and computation costs of the HD algorithm are $\mathcal{O}(nT)$ and $\mathcal{O}(nT^2)$, respectively, with $T$ being the number of iterations. When $T \ll n$, which is nearly always the case in practice, the new algorithm leads to significant reductions in runtime and memory footprint. Numerical results demonstrate the promise of the HD algorithm as a new computational tool in the study of high-dimensional random systems.
\end{abstract}


\section{Introduction}
\label{sec:intro}

To do research involving large random systems, one must make a habit of experimenting on the computer. Indeed, computer simulations help verify theoretical results and provide new insights, not to mention that they can also be incredibly fun.\ For many problems in statistical learning, random matrix theory, and statistical physics, the simulations that one encounters are often given as an iterative process in the form of
\begin{equation}\label{eq:dynamics}
\vx_{t+1} = f_t(\mM_t \vx_t, \vx_t, \vx_{t-1}, \ldots, \vx_{t-d}), \qquad \text{for } 1 \le t \le T.
\end{equation}
Here, $\mM_t$ is either $\mQ$ or $\mQ\tran$, where $\mQ$ is a random matrix; $f_t(\cdot)$ denotes some general vector-valued function that maps $\mM_t \vx_t$ and a few previous iteration vectors $\set{\vx_{t-i}}_{0 \le i \le d}$ to the next one $\vx_{t+1}$; and $T$ is the total number of iterations.

With suitable definitions of the mappings $f_t(\cdot)$, the formulation in \eref{dynamics} includes many well-known algorithms as its special cases. A classical example is to use iterative methods \cite{Stewart2002eigen} to compute the extremal eigenvalues/eigenvectors of a (spiked) random matrix \cite{BaikGP2005bbp,Benaych-GeorgesN2011spike}. Other examples include approximate message passing on dense random graphs \cite{Bolthausen2014iterative,BayatiM2011dynamics,OpperCW2016ising,RanganSF2019vamp,Fan2020amp}, and gradient descent algorithms for solving learning and estimation problems with random design \cite{CandesLS2015wirtinger,Goldt2019dynamics}. In this paper, we show that all of these algorithms can be simulated by an efficient \emph{matrix-free} scheme, if the random matrix $\mQ$ is drawn from an ensemble with translation-invariant properties. Examples of such ensembles include the i.i.d. Gaussian (i.e. the rectangular Ginibre) ensemble, the Haar-distributed random orthogonal ensemble, the Gaussian orthogonal ensemble, and their complex-valued counterparts.

What is wrong with the standard way of simulating \eref{dynamics}, where we first draw a sample $\mQ$ from the matrix ensemble and then carry through the iterations? This direct approach is straightforward to implement, but it cannot handle large dimensions. To see this, suppose that $Q \in \R^{m \times n}$ with $m \asymp n$. We shall also assume that the computational cost of the nonlinear mapping $f_t(\cdot)$ is $\mathcal{O}(n)$. It follows that, at each iteration of \eref{dynamics}, most of the computation is spent on the matrix-vector multiplication $\mM_t \vx_t$, at a cost of $\mathcal{O}(n^2)$ work. It is not at all obvious that one can do much better: Merely generating an $n\times n$ Gaussian matrix already requires $O(n^2)$ resource in computation and storage. When $n$ is large, $n^2$ is huge. In practice, this $\mathcal{O}(n^2)$ bottleneck means that one cannot simulate \eref{dynamics} at a dimension much larger than $n = 10^4$ on a standard computer (in a reasonable amount of time). However, there are many occasions, especially in the study of high-dimensional random systems, where one does wish to simulate large random matrices. A common workaround is to choose a moderate dimension (\emph{e.g.}, $n = 1000$), repeat the simulation over many independent trials, and then average the results to reduce statistical fluctuations. In addition to having to spend extra time on the repeated trials, this strategy can still suffer from strong finite size effects, making it a poor approximation of the true high-dimensional behavior of the underlying random systems. (An example is given in \sref{spectral} to illustrate this issue.)

In this paper, we propose a new algorithm, named \emph{Householder Dice} (HD), for simulating the dynamics in \eref{dynamics} on the Gaussian, Haar, and other related random matrix ensembles. Our new approach is \emph{statistically-equivalent} to the direct approach discussed above, but the memory and computation costs of the HD algorithm are $\mathcal{O}(nT)$ and $\mathcal{O}(nT^2)$, respectively, where $T$ is the number of iterations. In many problems, $T$ is much smaller than $n$. Typically, $T = \mathcal{O}(\text{polylog}(n))$.  In such cases, the new algorithm leads to significant reductions in runtime and memory footprint. In the numerical examples presented in \sref{examples}, we show that the crossover value of $n$ at which the HD algorithm outperforms the direct approach can be as low as $500$. The speedup becomes orders of magnitude greater for $n \ge 10^4$. Moreover, the HD algorithm expands the limits of what could be done on standard computers by making it tractable to perform dense random matrix experiments in dimensions as large as $n = 10^7$.

The basic idea of the HD algorithm follows the so-called principle of deferred decisions \cite{MitzenmacherU2005probability}. Intuitively, each iteration of \eref{dynamics} only probes $\mQ$ in a one-dimensional space spanned by $\vx_t$. Thus, if the total number of iterations $T \ll n$, we only need to expose the randomness of $\mQ$ over a few low-dimensional subspaces. It is then clearly wasteful to fix and store in memory the full matrix in advance. The situation is analogous to that of simulating a simple random walk for $T$ steps. We can let the random choices gradually unfold with the progress of the walk, fixing only the randomness that must be revealed at any given step. The challenge in our problem though is that the dynamics in \eref{dynamics} can create a complicated dependence structure between the random matrix $\mQ$ and the iteration vectors $\vx_t, \vx_{t-1} \ldots, \vx_0$. Nevertheless, we show that this dependence structure can be exactly accounted for by an adaptive and recursive construction of (random) Householder reflectors \cite{Householder1958reflector,TrefethenBau1997numerical} which exploit the inherent group symmetry of the matrix ensembles. 

Using Householder reflectors to speed up random matrix experiments is not a new idea. It is well-known \cite{Stewart1980ortho,Mezzadri2007generating} that a Haar-distributed random orthogonal matrix can be factorized as a product of Householder reflectors. This leads to an efficient way of generating a random orthogonal matrix with $\mathcal{O}(n^2)$ operations (rather than the $\mathcal{O}(n^3)$ cost associated with a full QR decomposition on a Gaussian matrix). Householder reflectors have also been applied to reduce a Gaussian matrix to a particularly simple random bidiagonal form \cite{Silverstein1985smallest,Edelman1989thesis}. This clever factorization leads to an $\mathcal{O}(n^2)$ algorithm for simulating the spectrum densities of Gaussian and Wishart matrices. (Recall that a standard eigenvalue decomposition on a dense matrix requires $\mathcal{O}(n^3)$ work in practice.) The proposed HD algorithm differs from the previous work in that it is a truly \emph{matrix-free} construction. With the progress of the dynamics, it gradually builds a recursive set of (random) Householder reflectors based on the current iteration vector $\vx_t$ and the history of the iterations up to this point. This adaptive, ``on-the-fly'' construction is essential for us to capture the correlation structures generated by the dynamics without fixing the matrix in advance.

The rest of the paper is organized as follows. We first present in \sref{examples} a few motivating examples to showcase the applications of the HD algorithm. \sref{main} contains the main technical results of this paper. After a brief review of the basic properties of the Haar measure (on classical matrix groups) and Householder reflectors, we present the construction of the proposed algorithm for the Gaussian and random orthogonal ensembles. Theorems~\ref{thm:Gaussian} and \ref{thm:Haar} establish the statistical equivalence of the HD algorithm and the direct approach to simulating \eref{dynamics}. Generalizations to complex-valued and other related ensembles are discussed in \sref{complex}. We conclude the paper in \sref{conclusion}.


%

\section{Numerical Examples}
\label{sec:examples}

Before delving into technical details, it is helpful to go through a few motivating applications that show how the HD algorithm can significantly speed up the simulation tasks.\footnote{All of the numerical experiments presented in this section have been done in Julia \cite{BezansonEKS2017julia}. The source code implementing the HD algorithm is available online at \url{https://github.com/yuelusip/HouseholderDice}.}

\subsection{Lasso with Random Designs}

In the first example, we consider the simulation of the lasso estimator widely used in statistics and machine learning. The goal is to estimate a sparse vector $\vbeta^\ast \in \R^n$ from its noisy linear observation given by
\[
\vy = \mQ \vbeta^\ast + \vw,
\]
where $\mQ \in \R^{m \times n}$ is a design (or covariate) matrix, and $\vw \sim \mathcal{N}(0, \sigma_w^2 \mI)$ denotes the noise in $\vy$. The lasso estimator is formulated as an optimization problem
\begin{equation}\label{eq:lasso}
\widehat{\vbeta} = \underset{\vbeta}{\arg\min} \ \frac{1}{2} \norm{\vy - \mQ \vbeta}^2 + \lambda \norm{\vbeta}_1,
\end{equation}
where $\widehat{\vbeta}$ is an estimate of $\vbeta^\ast$ and $\lambda > 0$ is a regularization parameter.

A popular method for solving \eref{lasso} is the iterative soft-thresholding algorithm (ISTA) \cite{BeckT2009fista}:
\begin{equation}\label{eq:ista}
\vx_{t+1} = \eta_{\lambda \tau}[\vx_t + \tau \mQ\tran (\vy - \mQ \vx_t)], \qquad 0 \le t < T,
\end{equation}
where $\tau > 0$ denotes the step size and $\eta_{\lambda \tau}(x) = \sign(x) \max\set{\abs{x} - \lambda \tau, 0}$ is an element-wise soft-thresholding operator. In many theoretical studies of lasso, one assumes that the design matrix is random with i.i.d. normal entries, \emph{i.e.} $Q_{ij} \overset{\text{i.i.d.}}{\sim} \mathcal{N}(0, \tfrac{1}{m})$. In this case, ISTA is an iterative process on a Gaussian matrix $\mQ$ and its transpose. With some change of variables, we can rewrite \eref{ista} as a special case of the general dynamics given in \eref{dynamics}, with one iteration of \eref{ista} mapped to two iterations of \eref{dynamics}.


We simulate the ISTA dynamics using both the proposed HD algorithm and the direct simulation approach that fixes the Gaussian matrix $\mQ$ in advance. In our experiments, the target sparse vector $\vbeta^\ast$ has i.i.d. entries drawn from the Bernoulli-Gaussian prior 
\[
\beta^\ast_i \sim \rho \delta(\beta) + (1-\rho) \frac{1}{\sqrt{2\pi \sigma_s^2}} \exp\Big\{-\frac{\beta^2}{2\sigma_s^2}\Big\},
\]
where $0 < \rho < 1$ and $\sigma_s > 0$ are two constants. The design matrix $\mQ$ is of size $m \times n$ with $m = \lfloor n/2 \rfloor$. 

\begin{figure}[t]
\centering
\subfigure[]{\label{fig:ist:1}
	\includegraphics[width=0.47\linewidth]{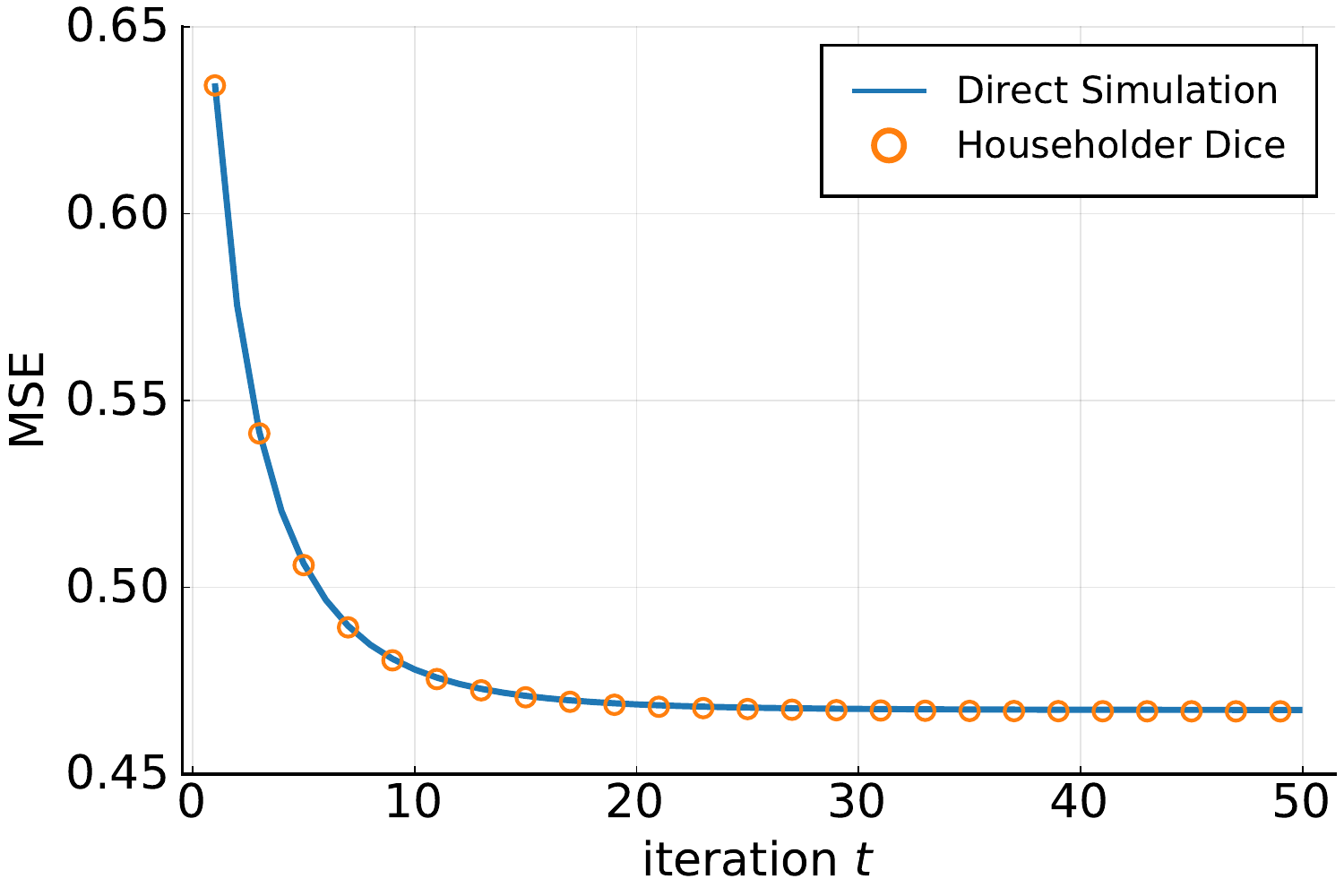}}
	\hspace{2ex}
	\subfigure[]{\label{fig:ist:2}
	\includegraphics[width=0.47\linewidth]{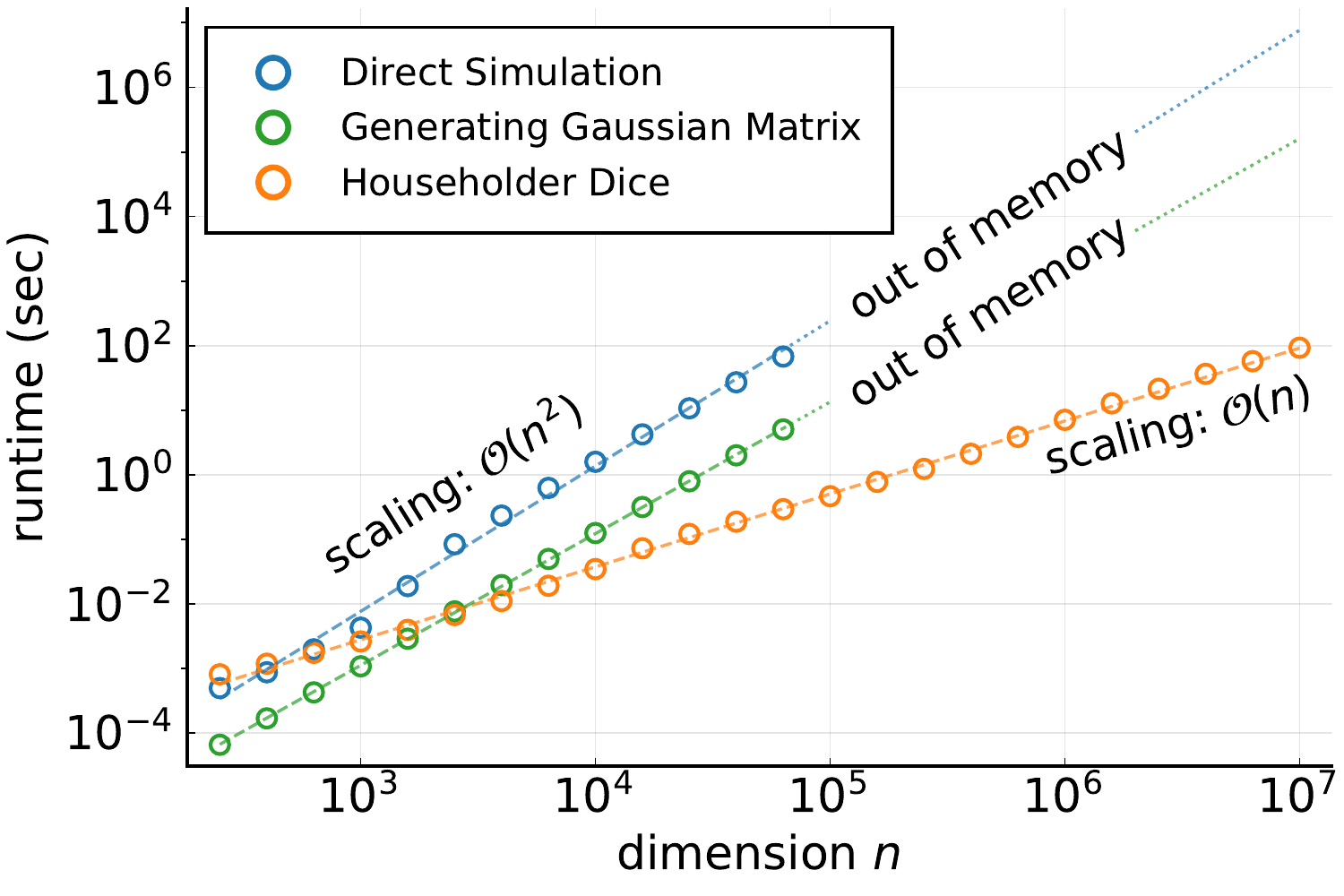}}
	\caption{Simulating the ISTA dynamics \eref{ista} using two approaches: the standard approach where the random matrix $\mQ$ is generated in advance, and the proposed HD algorithm. (a) The time-varying MSE averaged over $10^5$ independent trials, with the results from the two approaches matching. (b) Runtime versus the matrix dimension $n$, shown in log-log scale. In all the experiments, the parameters are set to $T=50$, $\lambda=2$, $\tau=0.3$, $\rho=0.2$, $\sigma_s = 2$ and $\sigma_w = 0.1$.}
\end{figure}

\fref{ist:1} shows the mean-squared error (MSE) $e^{(t)} \bydef \frac{1}{n} \norm{\vx_t - \vbeta^\ast}^2$ at each iteration of the dynamics, obtained by averaging over $10^5$ independent trials. The dimension here is $n = 1000$. The results from the HD algorithm (the red circles in the figure) match those from the standard approach (the blue line). This is expected, since Householder Dice is designed to be statistically equivalent to the direct approach. However, the two simulation approaches behave very differently in runtime and memory footprint, as shown in \fref{ist:2}. When we increase the dimension $n$, the runtime of the standard approach exhibits a quadratic growth rate $\mathcal{O}(n^2)$, whereas the runtime of the HD algorithm scales linearly with $n$. For comparison, we also plot in the figure the runtime for merely generating an i.i.d. Gaussian matrix $\mQ$ of size $m \times n$. 

For small dimensions ($250 \le n < 500$), the HD algorithm takes slightly more time than the direct approach, likely due to the additional overhead in implementing the former. Starting from $n \ge 500$, it becomes the more efficient choice. In fact, for $n \ge 2500$, the HD algorithm can simulate the ISTA dynamics (for 50 iterations) in less time than it takes to generate the Gaussian matrix. For dimensions beyond $n =10^5$, Householder Dice becomes the only feasible method, as implementing the direct approach would require more memory than available on the test computer (equipped with 32 GB of RAM). Finally, for $n = 10^7$, the runtime for the HD algorithm is 92 seconds, whereas by extrapolation the direct approach would have taken $7.7 \times 10^6$ seconds (approximately 89 days).

\subsection{Spectral Method for Generalized Linear Models}
\label{sec:spectral}

In the second example, we consider a spectral method  \cite{Li1992hessian,Netrapalli2013phase,LuL2020spectral,MondelliM2017spectral} with applications in signal estimation and exploratory data analysis. Let $\vxi$ be an unknown vector in $\R^n$ and $\set{\va_i}_{1 \le i \le m}$ a set of sensing vectors. We seek to estimate $\vxi$ from a number of generalized linear measurements $\set{y_i = f(\va_i\tran \vxi)}_{1 \le i \le m}$, where $f(\cdot)$ is some function modeling the acquisition process. The spectral method works as follows. Let
\begin{equation}\label{eq:D_mtx}
\mD \bydef \frac{1}{m} \mA \diag\set{y_1, \ldots, y_m} \mA\tran,
\end{equation}
where $\mA = [\va_1, \va_2, \ldots, \va_m]$ is a matrix whose columns are the sensing vectors. Denote by $\vx_1$ a normalized eigenvector associated with the largest eigenvalue of $\mD$. This vector $\vx_1$ is then our estimate of $\vxi$, up to a scaling factor. The performance of the spectral method is usually given in terms of the squared cosine similarity $\rho(\vxi, \vx_1) = \frac{(\vxi\tran \vx_1)^2}{\norm{\vxi}^2 \norm{\vx_1}^2}$.

Asymptotic limits of $\rho(\vxi, \vx_1)$ have been derived for the cases where $\mA$ is an i.i.d. Gaussian matrix \cite{LuL2020spectral,MondelliM2017spectral} or a subsampled random orthogonal matrix \cite{Dudeja2020spectral}. In our experiment, we consider the latter setting. Assume $m = \lfloor \alpha n \rfloor$ for some $\alpha > 1$. We can write
\begin{equation}\label{eq:A_ortho}
\mA = \begin{bmatrix}
\mI_n & \mat{0}_{n \times (m-n)}
\end{bmatrix} \mQ,
\end{equation}
where $\mQ \in \R^{m \times m}$ is a random orthogonal matrix drawn from the Haar distribution.

\begin{figure}
	\centering
	\includegraphics[width=0.46\linewidth]{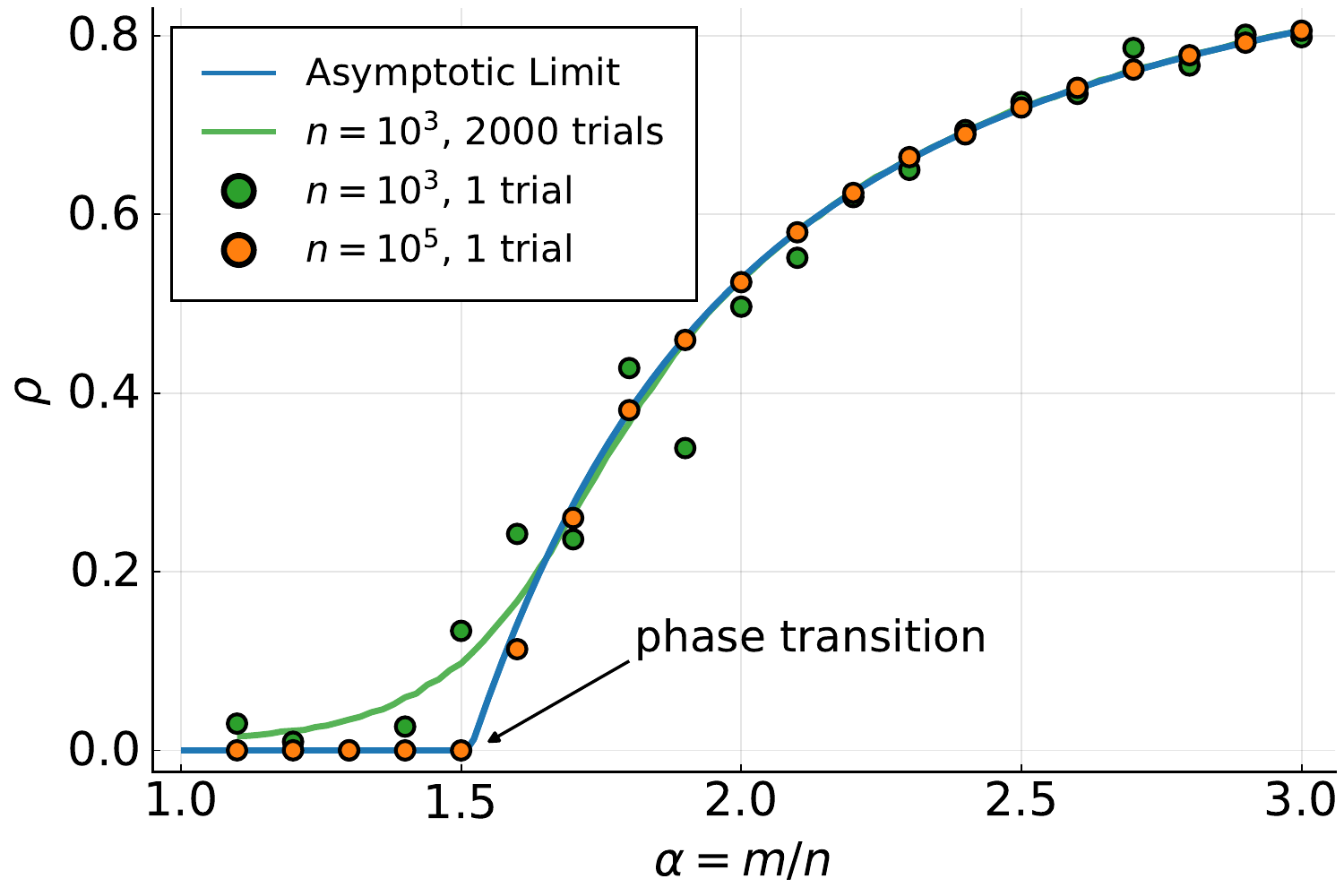}
	\caption{Simulating the spectral method given in \eref{D_mtx} and comparing the empirical results against the asymptotic predictions given in \cite{Dudeja2020spectral}. The result for $n = 10^3$ shows strong statistical fluctuations. This can be reduced by averaging over multiple independent trials, but the average curve still suffers from strong finite size effects, especially near the phase transition point. At $n = 10^5$, the match between the empirical results and the theoretical curve is nearly perfect in any (typical) trial.}
	\label{fig:spectral}
\end{figure}

We simulate the spectral method and compare its empirical performance with the asymptotic limit given in \cite{Dudeja2020spectral}. In our experiment, the measurement model is set to be $y_i = \tanh\big(\abs{\va_i\tran \vxi}\big)$. We compute the leading eigenvector $\vx_1$ by using the Krylov-Schur algorithm \cite{Stewart2002eigen}, which involves the repeated multiplication of $\mD$ with some vectors. With the forms of $\mD$ and $\mA$ given above, this algorithm can again be regarded as a special case of the general dynamics in \eref{dynamics}. We use the HD algorithm for the simulation and show the results in \fref{spectral} for two different matrix dimensions: $n = 10^3$ and $n = 10^5$. Observe that, at $n = 10^3$, there is still noticeable fluctuations between the actual performance of the spectral method (shown as green dots in the figure) and the theoretical prediction (the blue line). To get a better match, the standard practice is to do many independent trials (2000 in our experiment) and average the results. This gives us the green curve in the figure. Averaging can indeed reduce statistical fluctuations, but there are still strong finite size effects, especially near the phase transition point. This is a case where the capability of the proposed HD algorithm to handle large matrices becomes particularly attractive: when we increase the dimension to $n = 10^5$, the empirical results match the theoretical curve very closely in any (typical) trial, with no need for averaging over repeated simulations. In terms of runtime, it takes the HD algorithm less than 4 seconds on average to obtain an extremal eigenvalue/eigenvector of $\mD$ for $n = 10^5$.

%

\section{Main Results}
\label{sec:main}

\emph{Notation}: In what follows, $\ve_i$ denotes the $i$th natural basis vector, and $\mZ_i \bydef \mI - \ve_i \ve_i\tran$. For $i \le j$, we use $\mZ_{i:j}$ as a shorthand notation for $\prod_{i \le k \le j} \mZ_k$. The dimension of $\mZ_i$ and $\mZ_{i:j}$ is either $m \times m$ or $n \times n$, which will be made clear from the context. For any $\vv \in \R^n$, the ``slicing'' operation that takes a subset of $\vv$ is denoted by
\begin{equation}\label{eq:slicing}
\vv[i:j] \bydef [v_i, v_{i+1}, \ldots, v_j]\tran,
\end{equation}
where $1 \le i \le j \le n$. We use 
\[
\Ortho(n) \bydef \{\mM \in \R^{n \times n}: \mM \mM\tran = \mI_n\}
\]
to denote the set of $n\times n$ orthogonal matrices, and $\Unitary(n) \bydef \{\mM \in \C^{n \times n}: \mM \mM\herm = \mI_n\}$ its complex-valued counterpart. We will be mainly focusing on two real-valued random matrix ensembles: $\GE({m, n})$ represents the ensemble of $m \times n$ matrices with i.i.d. standard normal entries, and $\HE(n)$ represents the ensemble of random orthogonal matrices drawn from the Haar measure on $\Ortho(n)$. The generalizations to the complex-valued cases and other closely related ensembles will be discussed in \sref{complex}.

\subsection{Preliminaries}

The ensembles $\GE({m, n})$ and $\Ortho(n)$ share an important property: they are both \emph{invariant} with respect to multiplications by orthogonal matrices. For example, for any $\mG$ drawn from $\GE({m, n})$, it is easy to verify that
\begin{equation}\label{eq:Ginibre_inv}
\mG \sim \GE(m, n) \implies \mU \mG \mV \sim \GE(m, n),
\end{equation}
where $\mU\in \Ortho(m), \mV \in \Ortho(n)$ are any two deterministic or \emph{random} orthogonal matrices independent of $\mG$. 

Translation-invariant properties similar to \eref{Ginibre_inv} are actually what defines the Haar measure. We call a probability measure $\mu$ on $\Ortho(n)$ a Haar measure if 
\begin{equation}\label{eq:Haar_inv}
\mu(\mathcal{A}) = \mu(\mU \circ \mathcal{A}) = \mu(\mathcal{A} \circ \mU)
\end{equation}
for any measurable subset $\mathcal{A} \subset \Ortho(n)$ and any fixed $\mU \in \Ortho(n)$. Here, $\mU \circ \mathcal{A} \bydef \set{\mU \mV: \mV \in \mathcal{A}}$ and $\mathcal{A} \circ \mU$ is defined similarly. The classical Haar's theorem \cite{Haar1933measure,Meckes2019rmt} shows that there is one, and only one, translation-invariant probability measure in the sense of \eref{Haar_inv} on $\Ortho(n)$. In fact, the theorem holds in much greater generality. For example, it remains true for any compact Lie group, which includes $\Ortho(n)$ [and $\Unitary(n)$] as its special case. 

An additional property of $\Ortho(n)$, $\Unitary(n)$ (and compact Lie groups in general) is that left-invariance [the first equality in \eref{Haar_inv}] implies right-invariance (the second equality), and vice versa. This then allows us to have a simplified characterization of the Haar measure on $\Ortho(n)$. Specifically, to show that a random orthogonal matrix $\mQ \sim \HE(n)$, it is sufficient to verify that
\[
\mQ \eqd \mU \mQ
\]
for any fixed $\mU \in \Ortho(n)$, where $\eqd$ means that two random variables have the same distribution. We will use this convenient characterization in \sref{Haar}, when we establish the statistical equivalence between the proposed HD algorithm and the direct simulation of \eref{dynamics}.

Finally, we recall the construction of Householder reflectors \cite{Householder1958reflector,TrefethenBau1997numerical} from numerical linear algebra, as they will play important roles in our subsequent discussions. Given a vector $\vv \in \R^n$, how can we build an orthogonal matrix $\mH$ such that $\mH \vv = \norm{\vv} \ve_1$? This is exactly the problem addressed by Householder reflectors, defined here as
\begin{equation}\label{eq:Householder}
\mH(\vv) \bydef -\sign(v_1) \Big(\mI - 2\,  \frac{\vu \vu\tran}{\vu\tran\vu}\Big),
\end{equation}
where $\vu = \vv + \sign(v_1) \norm{\vv} \ve_1$, and $\sign(v_1) = 1$ if $v_1 \ge 0$ and $0$ otherwise. The choice of the sign in \eref{Householder} helps improve numerical stability (see \cite[Lecture 10]{TrefethenBau1997numerical}). 

By construction, $\mH(\vv)$ is a symmetric matrix whose eigenvalues are equal to $\pm 1$. It follows that $\mH(\vv) \in \Ortho(n)$. Moreover, we can verify from direct calculations that
\begin{equation}\label{eq:Householder_property}
\mH(\vv) \ve_1 = \vv/\norm{\vv} \quad \text{and} \quad \mH(\vv) \vv = \norm{\vv} \ve_1.
\end{equation}
Geometrically, $\mH(\vv)$ represents a reflection across the exterior (or interior) angle bisector of $\vv/\norm{\vv}$ and $\ve_1$. It is widely used in numerical linear algebra thanks to its low memory/computational costs. The matrix $\mH(\vv)$ itself can be efficiently represented with $\mathcal{O}(n)$ space, and matrix-vector multiplications involving $\mH(\vv)$ only require $\mathcal{O}(n)$ work.

For any $\vp \in \R^n$ and $1 \le k \le n$, we define a generalized Householder reflector as
\begin{equation}\label{eq:Householder_e}
\mH_{k}(\vp) \bydef \begin{bmatrix}
\mI_{k-1} &\\
& \mH(\vp[k:n])
\end{bmatrix},
\end{equation}
where $\mH(\cdot)$ is the reflector defined in \eref{Householder}, and $\vp[k:n]$ denotes a subvector obtained by removing the first $k-1$ elements of $\vp$. The construction in  \eref{Householder} requires that the reflecting vector $\vp[k:n]$ be nonzero. In order for \eref{Householder_e} to be always well-defined, we set $\mH_{k}(\vp) = \mI_n$ if $\vp[k:n] = \vzero$. Recall the notation $\mZ_{1:k}$ introduced at the beginning of the section. It is easy to verify that
\begin{equation}\label{eq:ortho}
\mZ_{1:k} \mH_{k}(\vp) \vp = \vzero,
\end{equation}
which means that the orthogonal transformation $\mH_{k}(\vp)$ can turn the last $n-k$ entries of $\vp$ to zero.
We will use this property in the construction of the HD algorithm.

\subsection{Gaussian Random Matrices}

We start by considering the case where the random matrix $\mQ$ in the dynamics \eref{dynamics} has i.i.d. Gaussian entries, \emph{i.e.}, $\mQ \sim \GE(m, n)$. In addition, we shall always assume that $\mQ$ is independent of the initial condition $\set{\vx_1, \vx_{0}, \ldots, \vx_{1-d}}$.

Suppose that the first step of \eref{dynamics} is in the form of $\vx_2 = f_1(\mQ \vx_1, \vx_1, \ldots, \vx_{1-d})$, i.e., $\mM_1 = \mQ$. How do we simulate this step without generating the entire Gaussian matrix $\mQ$? This can be achieved by a simple observation:
\begin{equation}\label{eq:Gaussian_factorization}
\mQ \eqd \vg_1 \ve_1\tran + \mG_1 \mZ_1 \eqd (\vg_1 \ve_1\tran + \mG_1 \mZ_1) \mR_1 \sim \GE(m, n),
\end{equation}
where $\mZ_1 = \mI - \ve_1 \ve_1\tran$, $\mR_1 \bydef \mH_{1}(\vx_1)$ is a (generalized) Householder reflector defined in \eref{Householder_e}, $\vg_1 \sim \GE(m, 1)$ is a Gaussian vector, and $\mG_1 \sim \GE(m, n)$ is an independent Gaussian matrix. Here and subsequently, whenever we generate new random vectors and matrices, they are always independent of each other and of the $\sigma$-algebra generated by all the other random variables constructed up to that point. For example, $\vg_1$ and $\mG_1$ in \eref{Gaussian_factorization} are understood to be independent of the initial condition $\set{\vx_1, \vx_0, \ldots, \vx_{1-d}}$. In \eref{Gaussian_factorization}, the first equality (in distribution) is obvious, and the second equality is due to the translation invariance of the Ginibre ensemble. (Recall \eref{Ginibre_inv} and the fact that $\mR_1$ is an orthogonal matrix.) 

The new representation
\begin{equation}\label{eq:Gaussian_1}
\mQ^{(1)} = (\vg_1 \ve_1\tran + \mG_1 \mZ_1) \mR_1
\end{equation}
looks like a rather convoluted way of writing an i.i.d. Gaussian matrix, but it turns out to be the right choice for efficient simulations. To see this, we use the property of the Householder reflector [see \eref{Householder_property}] which gives us $\mR_1 \vx_1 = \mH_{1}(\vx_1) \vx_1 = \norm{\vx_1} \ve_1$ and thus $\mZ_1 \mR_1 \vx_1 = \vzero$. It follows that
\[
\mQ^{(1)} \vx_1 = \norm{\vx_1} \vg_1.
\]
Thus, to simulate the first step of the dynamics, we only need to generate a Gaussian vector $\vg_1$. The more expensive Gaussian matrix $\mG_1$ does not need to be revealed (yet), as it is invisible to $\vx_1$.

It is helpful to consider two more iterations to see how this idea can be applied recursively. Suppose that the second iteration takes the form of $\vx_3 = f_2(\mQ \vx_2, \vx_2, \ldots, \vx_{2-d})$. In general, $\vx_2$ will have a nonzero component in the space orthogonal to $\vx_1$, and thus the Gaussian matrix $\mG_1$ in \eref{Gaussian_1} is no longer invisible to $\vx_2$, meaning that $\mG_1 \mZ_1 \mR_1 \vx_2 \neq \vzero$. However, we can use the trick in \eref{Gaussian_factorization} again by writing
\begin{equation}\label{eq:G12}
\mG_1 \eqd (\vg_2 \ve_2\tran + \mG_2 \mZ_2) \mR_2 \sim \GE(m, n),
\end{equation}
where $\vg_2 \sim \GE(m, 1)$, $\mG_2 \sim \GE(m, n)$, and $\mR_2 \bydef \mH_2(\mR_1 \vx_2)$ is again a generalized Householder reflector in \eref{Householder_e}. The subscript in $\mH_2$ should not be overlooked, as it signifies the precise way the matrix is constructed. [Recall \eref{Householder_e} for the notation convention we use.]

Observe that $\mR_2$ commutes with $\mZ_1$. Substituting \eref{G12} into \eref{Gaussian_1} then allows us to write
\begin{equation}\label{eq:Gaussian_2}
\mQ^{(2)} = \vu_1 \vv_1\tran + \vu_2 \vv_2\tran + \mG_2 \mZ_{1:2} \mR_2 \mR_1 \sim \GE(m, n),
\end{equation}
where $\vu_1 \bydef \vg_1$, $\vu_2 \bydef \vg_2$, $\vv_1 \bydef \mR_1 \ve_1$, and $\vv_2 \bydef \mR_1 \mR_2 \ve_2$. Just like what happens in \eref{Gaussian_1}, there is again no need to explicitly generate the dense Gaussian matrix $\mG_2$ in \eref{Gaussian_2}. To see this, we note that $\mZ_{1:2} \mR_2 \mR_1 \vx_2 = \mZ_{1:2} \mH_2(\mR_1 \vx_2) \mR_1 \vx_2 = \vzero$, where the second equality is due to \eref{ortho}. It follows that
\[
\mQ^{(2)} \vx_2 =(\vv_1\tran \vx_2) \vu_1 + (\vv_2\tran \vx_2) \vu_2.
\]

So far we have only been considering the case where we access $\mQ$ from the right. For the third iteration, let us suppose that we access $\mQ$ from the left, i.e., $\vx_4 = f_3(\mQ\tran \vx_3, \vx_3, \ldots, \vx_{3-d})$. The idea is similar. Let
\begin{equation}\label{eq:G23}
\mG_2 = \mL_1 (\ve_1 \vg_3\tran + \mZ_1 \mG_3) \sim \GE(m, n),
\end{equation}
where $\mL_1 \bydef \mH_1(\vx_3)$, $\vg_3 \sim \GE(n, 1)$, and $\mG_3 \sim \GE(m, n)$. Substituting \eref{G23} into \eref{Gaussian_2} gives us
\[
\mQ^{(3)} = \textstyle\sum_{1 \le i \le 3} \vu_i \vv_i\tran + \mL_1 \mZ_1 \mG_3 \mZ_{1:2} \mR_2 \mR_1 \sim \GE(m, n),
\]
where $\vu_3 \bydef \mL_1 \ve_1$ and $\vv_3 \bydef \mR_1 \mR_2 \mZ_{1:2} \vg_3$. Moreover, $[\mQ^{(3)}]\tran \vx_3 = \sum_{i \le 3} (\vu_i\tran \vx_3) \vv_i$. 

The general idea should now be clear. Rather than fixing the entire Gaussian matrix in advance, we let the random choices gradually unfold as the iteration goes on, generating only the randomness that must be revealed at each step. Continuing this process for $T$ steps, we reach the HD algorithm for the Ginibre ensemble, summarized in Algorithm~\ref{alg:Gaussian}. Its memory and computational costs can be determined as follows.

\begin{algorithm}[t]
\caption{Simulating \eref{dynamics} on $\GE(m,n)$ using Householder Dice}\label{alg:Gaussian}
\begin{algorithmic}[1]
\Require The initial condition $\set{\vx_1, \vx_{0}, \ldots, \vx_{1-d}}$, and the number of iterations $T \le \min\set{m, n}$
\State Set $r = 0$, $\ell = 0$, $\mL_0 = \mI_m$, and $\mR_0= \mI_n$. 
\For {$t = 1, \ldots, T$}
	\If {$\mM_t = \mQ$}
		\State $r \gets r + 1$
		\State Generate $\vg_t \sim \GE(m, 1)$
		\State $\mR_r = \mH_{r}(\mR_{r-1}\ldots \mR_{1} \mR_0 \vx_t)$ \label{alg:gc1}
		\State $\vu_t = \mL_0 \mL_1 \ldots \mL_\ell  \mZ_{1:\ell}\vg_t$
		\State $\vv_t = \mR_0 \mR_1 \ldots \mR_r \ve_r$
		\State $\vy_t = \sum_{i \le t} (\vv_i\tran \vx_t) \vu_i$ \label{alg:gc2}
	\Else 
		\State $\ell \gets \ell + 1$
		\State Generate $\vg_t \sim \GE(n, 1)$
		\State $\mL_\ell = \mH_{\ell}(\mL_{\ell-1}\ldots \mL_{1} \mL_0 \vx_t)$ \label{alg:gc1a}
		\State $\vu_t = \mL_0 \mL_1 \ldots \mL_\ell \ve_\ell$
		\State $\vv_t = \mR_0 \mR_1 \ldots \mR_r  \mZ_{1:r}\vg_t$
		\State $\vy_t = \sum_{i \le t} (\vu_i\tran \vx_t) \vv_i$ \label{alg:gc2a}
	\EndIf
	\State $\vx_{t+1} = f_t(\vy_t, \vx_t, \vx_{t-1}, \ldots, \vx_{t-d})$
\EndFor
\end{algorithmic}
\end{algorithm}


During its operation, Algorithm~\ref{alg:Gaussian} keeps track of $2T$ vectors $\set{\vu_t \in \R^m, \vv_t \in \R^n}_{t \le T}$ and $T$ Householder reflectors
\[
\set{\mL_i \in \R^{m\times m}}_{i \le \ell_T} \text{ and } \set{\mR_i \in \R^{n\times n}}_{i \le r_T},
\]
where $\ell_T$ (resp. $r_T$) records the number of times we have used $\mQ\tran$ (resp. $\mQ$) in the $T$ iterations of the dynamics. Clearly, $r_T +\ell_T = T$.  Thanks to the structures of the Householder reflectors in \eref{Householder}, the total memory footprint of Algorithm~\ref{alg:Gaussian} is $\mathcal{O}((m+n)T)$. At each iteration, computations mainly take place in lines \ref{alg:gc1}--\ref{alg:gc2} (or lines \ref{alg:gc1a}--\ref{alg:gc2a} if $\mM_t = \mQ\tran$). Since the matrices used there are always products of Householder reflectors, these steps require $\mathcal{O}((m+n)t)$ operations. As $t$ ranges from 1 to $T$, the computational complexity of Algorithm~\ref{alg:Gaussian} is thus $\mathcal{O}((m+n)T^2)$.

\begin{remark}
In line \ref{alg:gc1} and line \ref{alg:gc1a}, Algorithm~\ref{alg:Gaussian} recursively constructs two products of (generalized) Householder reflectors. Readers familiar with numerical linear algebra will recognize that this process is essentially the Householder algorithm for QR factorization \cite[Lecture 10]{TrefethenBau1997numerical}. Special data structures have been developed (see, e.g., \cite{SchreiberV1989storage}) to efficiently represent and operate on such products of reflectors.
\end{remark}

We can now exhibit the statistical equivalence of the HD algorithm and the direct simulation approach.

\begin{theorem}\label{thm:Gaussian}
Fix $T \le \min\set{m, n}$, and let $\set{\vx_t: 1-d \le t \le T+1}$ be a sequence of vectors generated by Algorithm~\ref{alg:Gaussian}. Let $\set{\widetilde{\vx}_t: 1-d \le t \le T+1}$ be another sequence obtained by the direct approach to simulating \eref{dynamics}, where we use the same initial condition (i.e. $\widetilde{\vx}_t = \vx_t$ for $1-d \le t \le 1$) but generate a full matrix $\mQ \sim \GE(m, n)$ in advance. The joint probability distribution of $\set{\vx_t}$ is equivalent to that of $\set{\widetilde{\vx}_t}$.
\end{theorem}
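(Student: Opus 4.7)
My plan is to proceed by strong induction on $t \in \{0, 1, \ldots, T\}$, carrying the invariant that the iterates $(\vx_1, \ldots, \vx_{t+1})$ of HD can be coupled with a matrix $\mQ^{(t)}$ of the form
\[
\mQ^{(t)} = \sum_{i=1}^{t} \vu_i \vv_i\tran + \mL_0 \mL_1 \cdots \mL_{\ell_t} \mZ_{1:\ell_t}\, \mG_t\, \mZ_{1:r_t} \mR_{r_t} \mR_{r_t - 1} \cdots \mR_1 \mR_0,
\]
where $\mG_t \sim \GE(m,n)$ is drawn fresh and is independent of the sigma-algebra generated by all randomness revealed through step $t$. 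In addition, $\mQ^{(t)} \sim \GE(m,n)$ marginally, and $\mQ^{(t)} \vx_i = \vy_i$ for every $i \le t$ with $\mM_i = \mQ$ (resp.\ $[\mQ^{(t)}]\tran \vx_i = \vy_i$ when $\mM_i = \mQ\tran$). Since the update $\vx_{t+1} = f_t(\vy_t, \vx_t, \ldots, \vx_{t-d})$ is identical in both simulation approaches and $\mQ^{(T)}$ has the same marginal law as the Gaussian matrix used in the direct approach, the invariant at $t = T$ immediately yields the desired joint-distributional equivalence of $\{\vx_t\}$ and $\{\widetilde{\vx}_t\}$.

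The base case $t = 0$ is trivial: set $\mQ^{(0)} = \mG_0 \sim \GE(m,n)$ independent of the initial condition. For the inductive step with $\mM_t = \mQ$ (the transpose case is symmetric), let $r = r_{t-1}+1$. The key identity is the distributional equality
\[
\mG_{t-1} \eqd \bigl(\vg_t \ve_r\tran + \mG_t \mZ_r\bigr) \mR_r,
\]
where $\vg_t \sim \GE(m,1)$ and $\mG_t \sim \GE(m,n)$ are fresh Gaussians, and $\mR_r \bydef \mH_r(\mR_{r-1} \cdots \mR_0 \vx_t)$ is the new Householder reflector constructed in Algorithm~\ref{alg:Gaussian}. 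The inner factorization is simply the column decomposition of an i.i.d.\ Gaussian matrix; multiplying by $\mR_r$ on the right preserves $\GE(m,n)$ by translation invariance \eref{Ginibre_inv}, which is applicable because the deferred-decision ordering guarantees that $\mR_r$, being a function of $\vx_t$ and $\mR_{r-1}, \ldots, \mR_0$, is independent of the freshly drawn $(\vg_t, \mG_t)$. Substituting this identity into the form of $\mQ^{(t-1)}$ from the inductive hypothesis, and using the facts that $\mR_r$ acts as the identity on the first $r-1$ coordinates (so that $\mR_r \mZ_{1:r-1} = \mZ_{1:r-1} \mR_r$), $\mZ_r \mZ_{1:r-1} = \mZ_{1:r}$, and $\ve_r\tran \mZ_{1:r-1} = \ve_r\tran$, one recovers the required form of $\mQ^{(t)}$ with $r_t = r$, $\ell_t = \ell_{t-1}$, $\vu_t = \mL_0 \cdots \mL_{\ell_t} \mZ_{1:\ell_t} \vg_t$, and $\vv_t = \mR_0 \cdots \mR_r \ve_r$, exactly matching the assignments in Algorithm~\ref{alg:Gaussian}.

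Finally, I would verify that $\mQ^{(t)} \vx_t = \vy_t$. The rank-one part contributes $\sum_{i \le t} (\vv_i\tran \vx_t)\vu_i$, which is precisely the computed $\vy_t$. The residual $\mG_t$-block vanishes thanks to the annihilation property \eref{ortho}: by construction, $\mR_r$ zeros out the last $n - r$ coordinates of $\mR_{r-1} \cdots \mR_0 \vx_t$, so $\mZ_{1:r} \mR_r \mR_{r-1} \cdots \mR_0 \vx_t = \vzero$. The main obstacle I foresee is the careful bookkeeping of the independence and commutation structure: one must state precisely which sigma-algebra each Householder reflector and each Gaussian factor is measurable with respect to, and then invoke \eref{Ginibre_inv} with a rotating matrix that is random but independent of the Gaussian being rotated. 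Once this measurability is formalized, the substitution and commutation manipulations are essentially algebraic and the induction closes.
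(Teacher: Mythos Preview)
Your proposal is correct and follows essentially the same approach as the paper: the paper's proof also carries the representation $\mQ^{(t)} = \sum_{i \le t} \vu_i \vv_i\tran + \mL_0 \cdots \mL_{\ell} \mZ_{1:\ell}\, \mG_t\, \mZ_{1:r} \mR_{r} \cdots \mR_0$ through an induction on $t$, uses the same distributional identity $\mG_{t} \eqd (\vg_{t+1} \ve_{r+1}\tran + \mG_{t+1} \mZ_{r+1}) \mR_{r+1}$ justified via \eref{Ginibre_inv}, the same commutation $\mR_{r+1} \mZ_{1:r} = \mZ_{1:r} \mR_{r+1}$, and the same annihilation property \eref{ortho} to kill the residual $\mG_t$-block. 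Your only (harmless) addition is to make explicit in the invariant that $\mQ^{(t)}$ reproduces all past $\vy_i$'s, which the paper leaves implicit in the coupling.
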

\begin{proof}
We start by describing the general structure of the algorithm. At the $t$-th iteration, the algorithm keeps the following representation of the matrix $\mQ$:
\begin{equation}\label{eq:Gaussian_t}
\mQ^{(t)} = \textstyle\sum_{i \le t} \vu_i \vv_i\tran + \underbrace{\mL_0 \mL_1 \ldots \mL_{\ell_t}}_{\text{Householder}} \mZ_{1:\ell_t} \mG_t \mZ_{1:r_t} \underbrace{\mR_{r_t} \ldots \mR_1 \mR_0}_{\text{Householder}},
\end{equation}
where $\mG_t \sim \GE(m, n)$ is a Gaussian matrix independent of the $\sigma$-algebra generated by all the other random variables constructed up to this point, and $\ell_t$ (resp. $r_t$) denotes the number of times we have used $\mQ\tran$ (resp. $\mQ$) in the first $t$ iterations of the dynamics. To lighten the notation, we will omit the subscript in the remainder of the proof and simply write them as $\ell$ and $r$.

The vectors $\set{\vu_i, \vv_i}$ and the Householder reflectors $\set{\mL_i}$, $\set{\mR_i}$ in \eref{Gaussian_t} are constructed recursively, as follows. We start with $\mL_0 = \mI_m$ and $\mR_0 = \mI_n$. At the $t$-th iteration (for $1 \le t \le T$), if $\mM_t = \mQ$ (i.e. if we need to compute $\mQ \vx_t$), we add a new Householder reflector
\begin{equation}\label{eq:Rt_g}
\mR_{r} = \mH_{r}(\mR_{r-1} \ldots \mR_1 \mR_0 \vx_t)
\end{equation}
and two new ``basis'' vectors
\[
\vu_t =  \mL_0 \mL_1 \ldots \mL_{\ell}  \mZ_{1:\ell}\vg_t \quad \text{and} \quad \vv_t = \mR_0 \mR_1 \ldots \mR_{r} \ve_{r},
\]
where $\vg_t \sim \GE(m, 1)$. The procedure for the case of $\mM_t = \mQ\tran$ is completely analogous: we add a new Householder reflector $\mL_{\ell}$ (on the left) and construct the basis vectors $\vu_t, \vv_t$ accordingly.

It is important to note that the Gaussian matrix $\mG_t$ in \eref{Gaussian_t} is never explicitly constructed in the algorithm. Assume without loss of generality that $\mM_t = \mQ$. Let $\vp = \mR_{r-1} \ldots \mR_1 \mR_0 \vx_t$. We then have
\[
\mZ_{1:r} \mR_{r} \ldots \mR_1 \mR_0 \vx_t = \mZ_{1:r} \mH_{r}(\vp) \vp = \vzero,
\]
where the second equality is due to \eref{ortho}. Consequently, $\mG_t$ remains invisible to $\vx_t$, and 
\[
\mQ^{(t)} \vx_t = \textstyle\sum_{i \le t} (\vv_i\tran \vx_{t}) \vu_i.
\]

To prove the assertion of the theorem, it suffices to show that, for all $1\le t \le T$, $\mQ^{(t)}$ has the correct distribution, namely $\mQ^{(t)} \sim \GE(m, n)$ and $\mQ^{(t)}$ is independent of the initial condition $\set{\vx_1, \vx_0, \ldots, \vx_{1-d}}$.  This is clearly true for $t = 1$, based on our discussions around \eref{Gaussian_1}. Now suppose that the condition on the distribution has been verified for $\mQ^{(t)}$ for some $t \ge 1$, . To go to $t+1$, we rewrite the Gaussian matrix $\mG_t$ in \eref{Gaussian_t} by using a decomposition similar to \eref{G12}. Specifically, if $\mM_t = \mQ$, we write
\begin{equation}\label{eq:Gtt1}
\mG_t \eqd (\vg_{t+1} \ve_{r+1}\tran + \mG_{t+1} \mZ_{r+1}) \mR_{r+1} \sim \GE(m, n),
\end{equation}
where $\vg_{t+1} \sim \GE(m, 1)$, $\mG_{t+1} \sim \GE(m, n)$, and $\mR_{r+1} \bydef \mH_{r+1}(\mR_r \ldots, \mR_1 \mR_0 \vx_{t+1})$. (The decomposition for the case where $\mM_t = \mQ\tran$ is completely analogous.)

That the new representation on the right-hand side of \eref{Gtt1} has the same distribution as $\mG_t$ is due to the translation-invariant property of the Ginibre ensemble [see \eref{Ginibre_inv}]. Substituting \eref{Gtt1} into \eref{Gaussian_t} allows us to conclude that the matrix
\begin{equation}\label{eq:G_t1a}
\textstyle\sum_{i \le t} \vu_i \vv_i\tran + \mL_0 \ldots \mL_\ell \mZ_{1:\ell}(\vg_{t+1} \ve_{r+1}\tran + \mG_{t+1} \mZ_{r+1}) \mR_{r+1} \mZ_{1:r} \mR_r \ldots \mR_0
\end{equation}
satisfies the required condition on its distribution. By construction, $\mR_{r+1}$ commutes with $\mZ_{1:r}$. [Recall \eref{Householder_e}.] This simple observation allows us to check that the matrix in \eref{G_t1a} is exactly $\mQ^{(t+1)}$. By induction on $t$ from 1 to $T$, we then complete the proof.
\end{proof}

\subsection{Haar-Distributed Random Orthogonal Matrices}
\label{sec:Haar}

We now turn to the case where $\mQ$ is a Haar-distributed random orthogonal matrix. The construction of the HD algorithm relies on the following factorization of the Haar measure on $\mathbb{O}(n)$.
\begin{lemma}\label{lemma:Haar_factorization}
Let $\vg \sim \GE(n, 1)$, $\mQ_{n-1} \sim \HE(n-1)$, and $\vv \in \R^n$, all of which are independent. Then
\begin{equation}\label{eq:Haar_factorization}
\mH_1(\vg) \begin{bmatrix}
1 &\\
& \mQ_{n-1}
\end{bmatrix} \mH_1(\vv) \sim \HE(n).
\end{equation}
\end{lemma}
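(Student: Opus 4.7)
My plan is to peel off the trailing $\mH_1(\vv)$ first and then show that $\mat{A} \bydef \mH_1(\vg)\, \diag(1, \mQ_{n-1})$ is itself Haar on $\Ortho(n)$. Since $\vv$ is independent of $(\vg, \mQ_{n-1})$, I can condition on $\vv$: the factor $\mH_1(\vv)$ becomes a deterministic orthogonal matrix, and the right-invariance of the Haar measure, applied conditionally, gives $\mat{A}\, \mH_1(\vv) \sim \HE(n)$ conditionally on $\vv$, hence marginally as well. So the heart of the argument reduces to showing $\mat{A} \sim \HE(n)$.

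To establish this, I would use the simplified characterization recalled in the preliminaries: it suffices to verify $\mU \mat{A} \eqd \mat{A}$ for every fixed $\mU \in \Ortho(n)$. The geometric observation that drives everything is that both $\mU \mH_1(\vg)\, \ve_1$ and $\mH_1(\mU\vg)\, \ve_1$ equal $\mU\vg / \norm{\vg}$, so the orthogonal matrix
\begin{equation*}
\mR \bydef \mH_1(\mU\vg)\, \mU\, \mH_1(\vg)
\end{equation*}
(well-defined since $\mH_1$ is an involution) fixes $\ve_1$ and therefore has block form $\mR = \diag(1, \mR_0)$ for some $\mR_0 \in \Ortho(n-1)$ that depends measurably on $\vg$ (with $\mU$ fixed). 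This lets me rewrite
\begin{equation*}
\mU \mat{A} \;=\; \mH_1(\mU\vg)\, \diag(1, \mR_0)\, \diag(1, \mQ_{n-1}) \;=\; \mH_1(\mU\vg)\, \diag\!\bigl(1,\, \mR_0 \mQ_{n-1}\bigr).
\end{equation*}

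Finally, I would verify that $(\mU\vg,\, \mR_0 \mQ_{n-1}) \eqd (\vg,\, \mQ_{n-1})$. Orthogonal invariance of the standard Gaussian gives $\mU\vg \eqd \vg$. Since $\mR_0$ is a measurable function of $\vg$ and $\mQ_{n-1}$ is independent of $\vg$ with Haar distribution on $\Ortho(n-1)$, left-invariance of $\HE(n-1)$ applied conditionally on $\vg$ yields $\mR_0 \mQ_{n-1} \sim \HE(n-1)$ conditionally on $\vg$; hence $\mR_0 \mQ_{n-1}$ is unconditionally Haar and independent of $\vg$, and in particular of $\mU\vg$. Applying the same deterministic map $(\vw, \mat{P}) \mapsto \mH_1(\vw)\, \diag(1, \mat{P})$ to equally distributed pairs then gives $\mU \mat{A} \eqd \mat{A}$, completing the proof.

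The delicate step is the decomposition of $\mU \mH_1(\vg)$: one cannot simply set it equal to $\mH_1(\mU\vg)$, and the discrepancy $\mR_0$ is a random function of $\vg$ that gets mixed with $\mQ_{n-1}$. The main obstacle is handling this mixing cleanly, and it is resolved by conditioning on $\vg$ and invoking the left-invariance of the lower-dimensional Haar measure, which promotes $\mR_0 \mQ_{n-1}$ from ``Haar given $\vg$'' to ``unconditionally Haar and independent of $\vg$.''
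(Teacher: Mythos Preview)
Your proof is correct and follows essentially the same route as the paper: both arguments hinge on the identity $\mU\,\mH_1(\vg)=\mH_1(\mU\vg)\,\diag(1,\mR_0)$ for some $\mR_0\in\Ortho(n-1)$ depending on $\vg$, and then absorb $\mR_0$ into $\mQ_{n-1}$ via left-invariance of $\HE(n-1)$. The only cosmetic difference is that you peel off the deterministic-given-$\vv$ factor $\mH_1(\vv)$ at the outset via right-invariance, whereas the paper carries it along and handles the triple $(\mU\vg,\mR_0\mQ_{n-1},\vv)$ jointly.
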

\begin{proof}
Let $\mM$ denote the left-hand side of \eref{Haar_factorization}. It is sufficient to show that $\mM \eqd \mU \mM$ for any fixed $\mU \in \mathbb{O}(n)$. The statement of the lemma then follows from the fact that the Haar measure is the unique (left) translation-invariant measure on $\mathbb{O}(n)$.

For any nonzero vector $\vx \in \R^n$, we denote by $\mB(\vx) \in \R^{n \times (n-1)}$ the submatrix consisting of the last $n-1$ columns of $\mH_1(\vx)$. It is also useful to notice that the first column of $\mH_1(\vx)$ is $\vx/\norm{\vx}$. Thus, $\mH_1(\vx) = \big[\frac{\vx}{\norm{\vx}} \mid \mB(\vx)\big]$. The following observation is easy to verify. For any fixed $\mU \in \mathbb{O}(n)$, there exists some $\mR \in \mathcal{O}(n-1)$ such that
\[
\mU \mB(\vx) = \mB(\mU \vx) \mR.
\]
Applying this to $\mB(\vg)$ [in $\mH_1(\vg)$] then allows us to write
\[
\mU \mM =\mH_1(\mU \vg) \begin{bmatrix}
1 &\\
& \mR \mQ_{n-1}
\end{bmatrix} \mH_1(\vv),
\]
where $\mR$ is an orthogonal matrix independent of $\mQ_{n-1}$ and $\vv$. Since the joint distribution of $(\mU\vg, \mR \mQ_{n-1}, \vv)$ is equal to that of $(\vg, \mQ_{n-1}, \vv)$ in \eref{Haar_factorization}, we must have $\mM \eqd \mU \mM$.
\end{proof}

The HD algorithm exploits the factorization in \eref{Haar_factorization} to speed up the simulation of Haar random matrices. Before presenting the algorithm in its full generality, we first illustrate how it unfolds in the first two iterations of \eref{dynamics}. For simplicity, we assume that $\mM_1 = \mM_2 = \mQ$. For the first iteration, we use \eref{Haar_factorization} to write $\mQ$ as
\begin{equation}\label{eq:H_1}
\mQ^{(1)} =  \mL_1 \begin{bmatrix}
1 &\\
& \mQ_{n-1}
\end{bmatrix} \mR_1\sim \HE(n),
\end{equation}
where $\mR_1 = \mH_1(\vx_1)$, $\mL_1 = \mH_1(\vg_1)$, $\vg_1 \sim \GE(n, 1)$ and $\mQ_{n-1} \sim \HE(n-1)$. 
Using the property of Householder reflectors given in \eref{Householder_property}, we have
\[
\mQ^{(1)} \vx_1 = \norm{\vx_1} \mH_1(\vg_1) \ve_1 = \frac{\norm{\vx_1}}{\norm{\vg_1}}\,\vg_1.
\]
Notice that only a Gaussian vector $\vg_1$ is needed here, and that the matrix $\mQ_{n-1}$ is invisible.

To simulate the second iteration, we apply the factorization \eref{Haar_factorization} recursively to write $\mQ_{n-1}$ as
\begin{equation}\label{eq:H12}
\mQ_{n-1} = \mH_{1}(\vg_2[2:n]) \begin{bmatrix}
1 &\\
& \mQ_{n-2}
\end{bmatrix} \mH_{1}(\vp[2:n]) \sim \HE(n-1),
\end{equation}
where $\vg_2 \sim \GE(n, 1)$, $\mQ_{n-2} \sim \HE(n-2)$, and $\vp = \mR_1 \vx_2$. Substituting \eref{H12} into \eref{H_1} then gives us
\begin{equation}\label{eq:H_2}
\mQ^{(2)}  = \mL_1 \mL_2 \begin{bmatrix}
\mI_2 &\\
&\mQ_{n-2}
\end{bmatrix} \mR_2 \mR_1,
\end{equation}
where $\mL_2 = \mH_2(\vg_2)$ and $\mR_2 = \mH_2(\vp)$. By construction, the vector $\mR_2 \mR_1 \vx_2$ has nonzero entries only in the first two coordinates. It follows that
\[
\mQ^{(2)} \vx_2 = \mL_1 \mL_2 \mR_2 \mR_1 \vx_2,
\]
with $\mQ_{n-2}$ in \eref{H_2} remaining invisible. 

\begin{algorithm}[t]
\caption{Simulating \eref{dynamics} on $\HE(n)$ using Householder Dice}\label{alg:Haar}
\begin{algorithmic}[1]
\Require The initial condition $\set{\vx_1, \vx_{0}, \ldots, \vx_{1-d}}$, and the number of iterations $T \le n$
\State Set $\mL_0 = \mI_m$, and $\mR_0= \mI_n$. 
\For {$t = 1, \ldots, T$}
	\State Generate $\vg_t \sim \GE(n, 1)$
	\If {$\mM_t = \mQ$}
		\State $\vp_t = \mR_{t-1} \ldots \mR_1 \mR_0 \vx_t$ \label{alg:hc_1}
		\State $\mR_t = \mH_t(\vp_t)$
		\State $\mL_t = \mH_t(\vg_t)$
		\State $\vy_t = \mL_{1} \ldots \mL_t \mR_t \vp_t$ \label{alg:hc_2}
	\Else 
		\State $\vp_t = \mL_{t-1} \ldots \mL_1 \mL_0 \vx_t$ \label{alg:hc_3}
		\State $\mL_t = \mH_t(\vp_t)$
		\State $\mR_t = \mH_t(\vg_t)$
		\State $\vy_t = \mR_{1} \ldots \mR_t \mL_t \vp_t$ \label{alg:hc_4}
	\EndIf
	\State $\vx_{t+1} = f_t(\vy_t, \vx_t, \vx_{t-1}, \ldots, \vx_{t-d})$
\EndFor
\end{algorithmic}
\end{algorithm}

Continuing this process, we see a simple pattern emerging. We summarize it in Algorithm~\ref{alg:Haar}. In general, the algorithm recursively constructs two sequences of Householder reflectors $\set{\mL_t, \mR_t}_{t \le T}$, starting from $\mL_0 = \mR_0 = \mI_n$. At the $t$-th iteration, we first generate a new Gaussian vector $\vg \sim \GE(n, 1)$. Suppose $\mM_t = \mQ$, we compute
\begin{equation}\label{eq:pt}
\vp_t = \mR_{t-1} \ldots \mR_1 \mR_0 \vx_t
\end{equation}
and add two reflectors $\mR_t = \mH_t(\vp_t)$ and $\mL_t = \mH_t(\vg_t)$. The algorithm then proceeds to the next iteration by letting $\vx_{t+1} = f_t(\vy_t, \vx_t, \ldots, \vx_{t-d})$, where $\vy_t = \mL_{1} \ldots \mL_t \mR_t \vp_t$. The steps the algorithms takes if $\mM = \mQ\tran$ is exactly symmetric, with the roles of $\set{\mR_i}$ and $\set{\mL_i}$ switched. The computational and memory complexity of Algorithm~\ref{alg:Haar} is similar to that of Algorithm~\ref{alg:Gaussian}. Specifically, the Householder reflectors can be efficiently represented by the corresponding reflection vectors, at a cost of $\mathcal{O}(nT)$ space. At each iteration, the matrix-vector multiplications in lines \ref{alg:hc_1}, \ref{alg:hc_2}, \ref{alg:hc_3} and \ref{alg:hc_4} can all be implemented in $\mathcal{O}(nt)$ operations (thanks to the Householder structure). Therefore, the total computational complexity is $\mathcal{O}(nT^2)$.

Finally, we establish the statistical ``correctness'' of Algorithm~\ref{alg:Haar} in the following theorem.
\begin{theorem}\label{thm:Haar}
Fix $T \le n$, and let $\set{\vx_t: 1-d \le t \le T+1}$ be a sequence of vectors generated by Algorithm~\ref{alg:Haar}. Let $\set{\widetilde{\vx}_t: 1-d \le t \le T+1}$ be another sequence obtained by the direct approach to simulating \eref{dynamics}, where we use the same initial condition (i.e. $\widetilde{\vx}_t = \vx_t$ for $1-d \le t \le 1$) but generate a random orthogonal matrix $\mQ \sim \HE(n)$ in advance. The joint probability distribution of $\set{\vx_t}$ is equivalent to that of $\set{\widetilde{\vx}_t}$.
\end{theorem}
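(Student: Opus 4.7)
I would follow the same inductive template used in the proof of Theorem~\ref{thm:Gaussian}, but now the key algebraic tool is Lemma~\ref{lemma:Haar_factorization} rather than the translation invariance of the Ginibre ensemble. The plan is to show that throughout the execution of Algorithm~\ref{alg:Haar}, the matrix can be represented, in distribution, as
\[
\mQ^{(t)} = \mL_1 \mL_2 \cdots \mL_t \begin{bmatrix} \mI_t & \\ & \mQ_{n-t} \end{bmatrix} \mR_t \cdots \mR_2 \mR_1,
\]
where $\mQ_{n-t} \sim \HE(n-t)$ is independent of the $\sigma$-algebra generated by the initial condition $\set{\vx_1,\vx_0,\ldots,\vx_{1-d}}$ and all Householder reflectors and Gaussian vectors produced in the first $t$ iterations, and that $\mQ^{(t)} \sim \HE(n)$. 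Once this invariant is established, the theorem follows since the dynamics only needs to interact with $\mQ$ through $\mQ^{(t)} \vx_t$ (or $[\mQ^{(t)}]\tran \vx_t$), and this quantity is shown to coincide with $\vy_t$ computed by the algorithm.

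The base case $t=0$ is trivial: $\mQ^{(0)}$ is simply $\mQ_n \sim \HE(n)$, independent of the initial condition by assumption. For the inductive step, suppose the invariant holds at step $t$ and consider iteration $t+1$; without loss of generality take $\mM_{t+1} = \mQ$. I would apply Lemma~\ref{lemma:Haar_factorization} to the hidden block $\mQ_{n-t}$, writing
\[
\mQ_{n-t} \eqd \mH_1(\vg_{t+1}[t+1{:}n])\,\begin{bmatrix}1 & \\ & \mQ_{n-t-1}\end{bmatrix}\,\mH_1(\vp_{t+1}[t+1{:}n]),
\]
with $\vp_{t+1} = \mR_t \cdots \mR_1 \vx_{t+1}$, $\vg_{t+1} \sim \GE(n,1)$, and $\mQ_{n-t-1} \sim \HE(n-t-1)$, all mutually independent and independent of everything previously generated. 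Padding by an $\mI_t$ block, this is equivalent to inserting $\mH_{t+1}(\vg_{t+1}) = \mL_{t+1}$ on the left and $\mH_{t+1}(\vp_{t+1}) = \mR_{t+1}$ on the right of $\mathrm{diag}(\mI_{t+1}, \mQ_{n-t-1})$, which recovers the invariant at step $t+1$. The symmetric case $\mM_{t+1} = \mQ\tran$ is handled identically using the fact that $\HE(n)$ is invariant under transposition (equivalently, by applying Lemma~\ref{lemma:Haar_factorization} with the roles of $\vg$ and $\vv$ swapped).

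The last step is to check that the algorithm's output $\vy_t$ agrees with $\mQ^{(t)} \vx_t$ (in the case $\mM_t = \mQ$). By construction $\mR_t \vp_t = \mH_t(\vp_t)\vp_t$ has its last $n-t$ coordinates equal to zero, so the hidden block $\mQ_{n-t}$ acts as the identity on $\mR_t \vp_t$ and hence $\mQ^{(t)} \vx_t = \mL_1 \cdots \mL_t \mR_t \vp_t = \vy_t$. The main obstacle I anticipate is the careful bookkeeping to confirm that the newly introduced randomness ($\vg_{t+1}$ and $\mQ_{n-t-1}$) is genuinely independent of the algorithm's history; this is what permits the repeated invocation of Lemma~\ref{lemma:Haar_factorization} with its independence hypothesis intact. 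This concern is addressed by noting that $\vp_{t+1}$ is measurable with respect to the past $\sigma$-algebra, while Lemma~\ref{lemma:Haar_factorization} only requires $\vv$ in \eref{Haar_factorization} to be independent of $(\vg, \mQ_{n-1})$, which holds here by the inductive hypothesis. Induction on $t$ from $0$ to $T$ then yields the claimed distributional equivalence.
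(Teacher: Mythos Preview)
Your proposal is correct and follows essentially the same approach as the paper's own proof: establish by induction the representation $\mQ^{(t)} = \mL_1\cdots\mL_t\,\diag(\mI_t,\mQ_{n-t})\,\mR_t\cdots\mR_1$ with $\mQ_{n-t}\sim\HE(n-t)$ independent of the past, invoke Lemma~\ref{lemma:Haar_factorization} for the inductive step, and then verify $\mQ^{(t)}\vx_t = \vy_t$ via the zeroing property \eref{ortho} of the Householder reflector. The only cosmetic difference is that you start the induction at $t=0$ whereas the paper starts at $t=1$.
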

\begin{proof}
The proof is very similar to that of Theorem~\ref{thm:Gaussian}. At the $t$-th iteration, the algorithm has constructed a representation of the random orthogonal matrix $\mQ$ as
\begin{equation}\label{eq:H_t}
\mQ^{(t)} =  \mL_1 \mL_2 \ldots \mL_t \begin{bmatrix}
\mI_t &\\
&\mQ_{n-t}
\end{bmatrix} \mR_t \ldots \mR_2 \mR_1,
\end{equation}
where $\set{\mL_i, \mR_i}_{i \le t}$ is a collection of Householder reflectors, and $\mQ_{n-t} \sim \HE(n-t)$ is an $(n-t) \times (n-t)$ random orthogonal matrix independent of the $\sigma$-algebra generated by all the other random variables constructed up to this point. We shall have established the theorem if we prove the following two claims for $1 \le t \le T$: (a) $\mQ^{(t)} \sim \HE(n)$ and $\mQ^{(t)}$ is independent of the initial condition $\set{\vx_t}_{1-d \le t \le 1}$; (b) If $\mM_t = \mQ$ in \eref{dynamics}, then
\begin{equation}\label{eq:Qtxt}
\mQ^{(t)}\vx_t = \mL_{1} \ldots \mL_t \mR_t \vp_t,
\end{equation}
where $\vp_t$ is as defined in \eref{pt}. If $\mM_t = \mQ\tran$, then $[\mQ^{(t)}]\tran \vx_t = \mR_1 \mR_2 \ldots \mR_t \mL_t \ldots \mL_2 \mL_1 \vx_t$.

Claim (a) can be proved by induction. We have already established its correctness for $t = 1$. [See \eref{H_1}.]  To propagate the claim from iteration $t$ to $t+1$, we simply apply Lemma~\ref{lemma:Haar_factorization} to rewrite $\mQ_{n-t}$ in \eref{H_t} as
\[
\mQ_{n-t} \eqd \mH_1(\vg_{t+1}[t+1:n]) \begin{bmatrix}
1 &\\
& \mQ_{n-t-1}
\end{bmatrix} \mH_1(\vp_{t+1}[t+1:n]) \sim \HE(n-t),
\]
where $\vg_{t+1} \sim \GE(n, 1)$, $\mQ_{n-t-1} \sim \HE(n-t-1)$, and $\vp_{t+1} = \mR_t \ldots \mR_2 \mR_1 \vx_{t+1}$. (This is for the case of $\mM_{t+1} = \mQ$, but the treatment for the case of $\mM_{t+1} = \mQ\tran$ is completely analogous.) Substituting this equivalent representation into \eref{H_t} gives us $\mQ^{(t+1)}$. 

To establish Claim (b), we again assume without loss of generality that $\mM_t = \mQ$. By the definition of $\vp_t$ in \eref{pt} and that of $\mR_t$, we have
\[
\mQ^{(t)} \vx_t =  \mL_1 \mL_2 \ldots \mL_t \begin{bmatrix}
\mI_t &\\
&\mQ_{n-t}
\end{bmatrix} \mH_t(\vp_t)\vp_t.
\]
Using \eref{ortho}, we can then verify the expression given in \eref{Qtxt}.
\end{proof}

\subsection{Other Random Matrix Ensembles}
\label{sec:complex}

The Gaussian and Haar ensembles studied above can serve as building blocks for simulating other related random matrix ensembles. For example, consider the classical Gaussian orthogonal ensemble (GOE). A symmetric $n\times n$ matrix $G$ is drawn from $\text{GOE}(n)$ if $\{G_{ij}\}_{1\le i \le j\le n}$ are independent random variables, with $G_{ii} \sim \mathcal{N}(0, 2)$ and $G_{ij}\sim \mathcal{N}(0, 1)$ for $i < j$. Clearly,
\[
\mQ \sim \GE(n, n) \implies \frac{1}{\sqrt{2}}(\mQ + \mQ\tran) \sim \text{GOE}(n).
\]
It follows that a single matrix-vector multiplication involving $\mG \sim \text{GOE}(n)$ can be simulated via two matrix-vector multiplications involving a nonsymmetric Gaussian matrix, i.e.,
\[
\vy = \mG \vx \implies \widehat{\vy}= \mQ \vx \text{ and } \vy = (\mQ\tran \vx + \widehat{\vy})/\sqrt{2}.
\]

As a second example, we consider random matrices in the form of 
\begin{equation}\label{eq:inv}
\mQ = \mU \mSigma \mV,
\end{equation}
where $\mU \sim \HE(m)$ and $\mV \sim \HE(n)$ are two independent random orthogonal matrices, and $\mSigma \in \R^{m \times n}$ is a rectangular diagonal matrix independent of $\mU, \mV$. Matrices like these often appear in the study of free probability theory \cite{MingoSpeicher2017free}. They are also used as a convenient model for matrices whose singular vectors are \emph{generic} \cite{OpperCW2016ising,RanganSF2019vamp,Fan2020amp}. Strictly speaking, Theorem~\ref{thm:Haar} only applies to the case where the dynamics operates on a single random orthogonal matrix. However, it is obvious from the proof that the idea applies to more general dynamics involving a finite number of independent random orthogonal matrices. Thus, Algorithm~\ref{alg:Haar} can be easily adapted to handle the matrix ensemble given in  \eref{inv}.

Finally, we note that the constructions of the HD algorithm can be generalized to the complex-valued cases, with the random matrices drawn from the complex Ginibre ensemble, the Haar ensemble on the unitary group $\mathbb{U}(n)$, and the Gaussian unitary ensemble, respectively. We avoid repetitions, as most changes in such generalizations are straightforward (such as replacing $\mM\tran$ by $\mM\herm$). In what follows, we only present the formula for a complex version of the Householder reflector, as it might be less well-known.

Let $\vv \in \C^n$ be a nonzero vector. Write $v_1 / \norm{\vv} = r e^{i \theta}$, where $r$ is a nonnegative real number. (When $v_1 = 0$, we have $r = 0$ and set $\theta=0$.) We define a unitary reflector  \cite[pp. 48--49]{Wilkinson1988algebraic} as
\begin{equation}\label{eq:Householder_c}
\mH(\vv) = (-e^{-i \theta}) \Big[\mI_n - \frac{(\vv/\norm{\vv} + e^{i \theta} \ve_1)(\vv/\norm{\vv} + e^{i \theta} \ve_1)\herm}{1+r}\Big].
\end{equation}
It is easy to check that $\mH(\vv)$ is a unitary matrix such that $\mH(\vv) \vv = \norm{\vv} \ve_1$ and $\mH\herm(\vv) \ve_1 = \vv / \norm{\vv}$. Moreover, if $\vv$ is real, then \eref{Householder_c} reduces to the Householder reflector given in \eref{Householder}. 

\section{Conclusion}
\label{sec:conclusion}

We proposed a new algorithm called Householder Dice for simulating dynamics on several dense random matrix ensembles with translation-invariant properties. Rather than fixing the entire random matrix in advance, the new algorithm is matrix-free, generating only the randomness that must be revealed at any given step of the dynamics. The name of the algorithm highlights the central role played by an adaptive and recursive construction of (random) Householder reflectors. These orthogonal transformations exploit the group symmetry of the matrix ensembles, while simultaneously maintaining the statistical correlations induced by the dynamics. Numerical results demonstrate the promise of the HD algorithm as a new computational tool in the study of high-dimensional random systems.

%
%
%
%

\bibliographystyle{ieeetr}
\bibliography{refs}

\end{document}